\documentclass[a4paper,11pt]{article}

\usepackage[hidelinks]{hyperref}
\usepackage[utf8]{inputenc}
\usepackage[T1]{fontenc}
\usepackage[small]{caption}
\usepackage{amsmath}
\usepackage{amsthm}
\usepackage{microtype}
\usepackage{fullpage}
\usepackage{amssymb}
\usepackage{cleveref}
\usepackage{centernot}
\usepackage{dsfont}

\usepackage{tikz}
\usetikzlibrary{arrows}

\theoremstyle{plain}
\newtheorem{theorem}{Theorem}
\theoremstyle{definition}
\newtheorem{definition}[theorem]{Definition}
\theoremstyle{remark}

\theoremstyle{plain}
\newtheorem{lemma}[theorem]{Lemma}

\newtheorem{claim}[theorem]{Claim}

\DeclareMathOperator*\Prob{\bf Pr}

\newcommand{\prn}[1]{\left(#1\right)}
\newcommand{\cprn}[1]{\!\left(#1\right)}
\newcommand{\sqbra}[1]{\left[#1\right]}
\newcommand{\csqbra}[1]{\!\left[#1\right]}
\newcommand{\sqprn}[1]{\left[#1\right)}

\newcommand{\abs}[1]{\left|#1\right|}

\newcommand{\brkts}[1]{\left\{#1\right\}}

\newcommand{\bool}{\brkts{0,1}}

\newcommand{\poly}[1]{\mathrm{poly}\prn{#1}}
\newcommand{\cpoly}[1]{\mathrm{poly}\cprn{#1}}

\newcommand{\cprq}[2]{\Prob_{#2}\csqbra{#1}}

\newcommand{\TV}{\mathrm{TV}}
\newcommand{\Bern}{\mathrm{Bern}}
\newcommand{\PMFEquals}{\textsc{PMFEquals}}

\newcommand{\SubsetProd}{\textsc{SubsetProd}}

\newcommand{\NP}{\mathsf{NP}}

\newcommand{\SZK}{\mathsf{SZK}}
\newcommand{\NISZK}{\mathsf{NISZK}}

\newcommand{\R}{\mathbb{R}}

\newcommand{\KNAPSACK}{\textsc{Knapsack}}
\newcommand{\ProdDtv}{\textsc{DtvProduct}}
\newcommand{\ProdDtvUnif}{\textsc{DtvProductUnif}}
\newcommand{\BayesDtv}{\textsc{DtvBayesNet}}

\newcommand{\tv}{d_{\mathrm{TV}}}
\newcommand{\eps}{\varepsilon}

\newcommand{\dtv}{\tv}
\newcommand{\ignore}[1]{}

\newcommand{\U}{\mathbb{U}}

\renewcommand{\P}{\mathsf{P}}

\allowdisplaybreaks

\title{\bf On Approximating Total Variation Distance}

\author{%
\bf Arnab Bhattacharyya
\\
School of Computing \\
National University of Singapore \\
\texttt{arnabb@nus.edu.sg}
\and
\bf Sutanu Gayen \\
CSE Department \\
Indian Institute of Technology Kanpur \\
\texttt{sutanu@cse.iitk.ac.in}
\and
\bf Kuldeep S. Meel \\
School of Computing \\
National University of Singapore \\
\texttt{meel@comp.nus.edu.sg}
\and
\bf Dimitrios Myrisiotis \\
School of Computing \\
National University of Singapore \\
\texttt{dimitris@nus.edu.sg}
\and
\bf A. Pavan \\
Department of Computer Science \\
Iowa State University \\
\texttt{pavan@iastate.edu}
\and
\bf N.~V.~Vinodchandran \\
School of Computing \\
University of Nebraska-Lincoln \\
\texttt{vinod@cse.unl.edu}
}

\begin{document}

\maketitle

\begin{abstract}
Total variation distance (TV distance) is a fundamental notion of distance between probability distributions.
In this work, we introduce and study the problem of computing the TV distance of two product distributions over the domain $\{0,1\}^n$.
In particular, we establish the following results.
\begin{enumerate}
\item
The problem of exactly computing the TV distance of two product distributions is $\#\P$-complete.
This is in stark contrast with other distance measures such as KL, Chi-square, and Hellinger which tensorize over the marginals leading to efficient algorithms.
\item
There is a fully polynomial-time deterministic approximation scheme (FPTAS) for computing the TV distance of two product distributions $P$ and $Q$ where $Q$ is the uniform distribution.
This result is extended to the case where $Q$ has a constant number of distinct marginals.
In contrast, we show that when $P$ and $Q$ are Bayes net distributions, the relative approximation of their TV distance is $\NP$-hard.
\end{enumerate}
\end{abstract}

\section{Introduction}

\label{sec:introduction}

An overarching theme in modern machine learning is the use of probability distributions to describe data.
Datasets are often modeled by high-dimensional distributions with additional structures reflecting correlations among the features.
In this context, a basic problem is {\em distance computation}: Given two distributions $P$ and $Q$, compute $\rho(P,Q)$ for a distance measure $\rho$.
For example, $P$ and $Q$ could be the outputs of two unsupervised learning algorithms, and one could ask how much they differ.
As another example, a key component of generative adversarial networks~\cite{goodfellow2014generative,arjovsky2017wasserstein} is the discriminant which approximates the distance between the model and the true distributions.

Given two distributions $P$ and $Q$ over a finite domain $\mathcal{D}$, their {\em total variation (TV) distance} or {\em statistical difference} $\dtv(P,Q)$ is defined as
\[
\dtv(P,Q)= \max_{S \subseteq \mathcal{D}} \cprn{P(S)-Q(S)}
= \frac12 \sum_{x \in \mathcal{D}} |P(x) - Q(x)|
\]
which is also equal to $\sum_{x \in \mathcal{D}} \max \cprn{0, P(x) - Q(x)}$.
The total variation distance satisfies certain fundamental properties.
First, it has a physical interpretation:
The TV distance between two distributions is the maximum bias of any event with respect to the two distributions.
Second, it satisfies many mathematically desirable properties:
It is bounded, it is a metric, and it is invariant with respect to bijections.
Because of these reasons, the total variation distance is one of the main distance measures employed in a wide range of areas including probability and statistics, machine learning, information theory, and pseudorandomness.

In this work, we study the total variation distance from a {\em computational} perspective.
Given two distributions $P$ and $Q$ over a finite domain $\mathcal{D}$, how hard is it to compute $\dtv(P,Q)$?
If $P$ and $Q$ are explicitly specified by the probabilities of all of the points of the (discrete) domain $\mathcal{D}$, summing up the absolute values of the differences in probabilities at all points leads to a simple linear time algorithm.
However, in many applications, the distributions of interest are of a high dimension with succinct representations.
In these scenarios, since the size of the domain $\mathcal{D}$ is very large, an $O(|\mathcal{D}|)$ algorithm is highly impractical.
Therefore, a fundamental computational question is:
\begin{quote}
Can we design efficient algorithms (with running time polynomial in the size of the representation) for computing the TV distance between two high-dimensional distributions with succinct representations?
\end{quote}
The simplest model for a high-dimensional distribution is the {\em product distribution}, which is a product of independent Bernoulli trials.
More precisely, a product distribution $P$ over $\mathcal{D} = \{0,1\}^n$ is succinctly described by $n$ parameters $p_1, \dots, p_n$ where each $p_i \in [0,1]$ is independently the probability that the $i$-th coordinate equals $1$.
Product distributions serve as a great testing ground for various intuitions regarding computational statistics, due to their ubiquity and simplicity.
Despite their simplicity, surprisingly little is known about the complexity of computing the TV distance between product distributions.
A very recent result shows the existence of a fully polynomial-time {\em randomized approximation} scheme (FPRAS) to relatively approximate the TV distance between two product distributions~\cite{fgjw22}.
However, this result does not shed light on the complexity of the exact computation of TV distance as well as the existence of {\em deterministic} approximation schemes (FPTAS).
Understanding the computational landscape of the total variation distance of product distributions is an important question. The present work makes significant progress towards this research goal. 

\subsection{Our Contributions}

Our contributions are the following:
\begin{enumerate}
\item
\label{item:hardness}
{\bf We show that the exact computation of the total variation distance between two product distributions $P$ and $Q$ is $\#\P$-complete (Theorem~\ref{thm:hardness-Bern-products-intro}).}
This hardness result holds even when the distribution $Q$ has at most $3$ distinct one-dimensional marginals.
Hence it is unlikely that there is an efficient algorithm for this computational problem, as an efficient algorithm for this problem would lead to efficient algorithms for many hard counting problems, including that of computing the number of satisfying assignments of a Boolean formula and all of the problems in the Polynomial-time Hierarchy~\cite{Sto76,Tod91}.

This is a surprising result, given that for many other distance measures such as Hellinger, Chi-square, and KL, there are efficient algorithms for computing the distance between two product distributions.
This is so, as these distances {\em tensorize} over their marginals (folklore; see also \cite{DBLP:conf/alt/0001G0V21}), in the sense that they are easily expressible in terms of their one-dimensional marginals.
\item
{\bf We design a fully polynomial-time {\em deterministic} approximation scheme (FPTAS) that computes a relative approximation of the TV distance between two product distributions $P$ and $Q$ where $Q$ is the uniform distribution (\Cref{thm:pqhalf}).}
Building on the techniques developed, we design an FPTAS for the case when $Q$ has a constant number of distinct one-dimensional marginals (\Cref{prop:Bern-a}; \Cref{prop:constant-q_i's}).
This, combined with the earlier-mentioned hardness result, completely characterizes the complexity of TV distance computation for product distributions when one of the distributions has a constant number of one-dimensional marginals.
\item
We investigate the complexity of the problem when the distributions $P$ and $Q$ are slightly more general than product distributions.
In particular, {\bf we show that it is $\NP$-hard to relatively approximate the TV distance between two sparse {\em Bayesian networks}~\cite{Pea89} (see \Cref{thm:dTV-zero-or-not-NP-complete-intro})}.
\end{enumerate}
In summary, our study showcases the rich complexity landscape of the problem of total variation distance computation, even for simple distributions.

\subsection{Organization}

The rest of the paper is organized as follows:
We present preliminaries in \Cref{sec:preliminaries} and discuss related work in \Cref{sec:relatedwork}.
We then present, in \Cref{sec:hardness}, the $\#\P$-hardness of the TV distance computation between product distributions.
In \Cref{sec:deterministic-approximation} we present our polynomial-time deterministic approximation schemes for the estimation of TV distance between some special cases of product distributions.
We conclude in \Cref{sec:conclusion}.
Finally, in \Cref{sec:appendix}, we present the deferred proof of \Cref{thm:GKM}.

\section{Preliminaries}

\label{sec:preliminaries}

We use $\sqbra{n}$ to denote the \emph{ordered} set $\brkts{1,\dots,n}$.
We will use $\log$ to denote $\log_2$ and $\U$ to denote the uniform distribution over the sample space.
Throughout the paper, we shall assume that all probabilities are represented as rational numbers of the form $a/b$.

A Bernoulli distribution with parameter $p$ is denoted by $\Bern\cprn{p}$.
A \emph{product distribution} is a product of independent Bernoulli distributions.
A product distribution $P$ over $\{0,1\}^n$ can be described by $n$ parameters $p_1,\dots,p_n$ where each $p_i \in [0,1]$ is the probability that the $i$-th coordinate equals $1$ (such a $P$ is usually denoted by $\bigotimes_{i=1}^n\Bern(p_i)$).
For any $x\in \{0,1\}^n$, the probability of $x$ with respect to the distribution $P$ is given by $P\cprn{x}=\prod_{i\in S}p_i\prod_{i\in\sqbra{n}\setminus S}\prn{1-p_i}\in\sqbra{0,1}$, where $S \subseteq \sqbra{n}$ is such that $i\in S$ if and only if the $i$-th coordinate of $x$ is $1$, independently.

$\ProdDtv$ is the following computational problem:
Given two product distributions $P$ and $Q$ over the sample space $\{0,1\}^n$, compute $\dtv(P,Q)$.
When the distribution $Q$ is the uniform distribution over $\{0,1\}^n$, we denote the above problem by $\ProdDtvUnif$.

A \emph{Bayes net} is specified by a directed acyclic graph (DAG) and a sequence of conditional probability tables (CPTs), one for each of its nodes (and for each setting of the parents of each node).
In this way, one may define a probability distribution over the nodes of a Bayes net.
We will also  consider the problem of computing $\dtv(P,Q)$ where $P$ and $Q$ are Bayes net distributions, which we denote by $\BayesDtv$.

A function $f$ from $\{0,1\}^*$ to non-negative integers is in the class $\#\P$ if there is a polynomial time non-deterministic Turing machine $M$ so that for any $x$, $f(x)$ is equal to the number of accepting paths of $M(x)$.
Our hardness result will make use of the known $\#\P$-complete problem $\#\SubsetProd$ which is a counting version of the $\NP$-complete problem $\SubsetProd$ (see \cite{GJ79}; the proof is attributed to Yao).
$\#\SubsetProd$ is the following problem:
Given integers $a_1,\dots,a_n$, and a target number $T$, compute the number of sets $S\subseteq\sqbra{n}$ such that $\prod_{i\in S}a_i=T$.

We also require a counting version of the $\KNAPSACK$ problem, $\#\KNAPSACK$ which is defined as follows:
Given weights $a_1,\dots,a_n$ and capacity $b$, compute the number of sets $S\subseteq\sqbra{n}$ such that $\sum_{i\in S}a_i\leq b$. It is known that $\#\KNAPSACK$ is $\#\P$-complete.

The following notion of an approximation algorithm is important in this work.

\begin{definition}
A function $f: \bool^* \to \R$ admits a \emph{fully polynomial-time approximation scheme (FPTAS)} if there is a {\em deterministic} algorithm $\mathcal{A}$ such that for every input $x$ (of length $n$) and $\epsilon >0$, $\mathcal{A}$ on inputs $x$ and $\eps$ outputs a $\prn{1+\eps}$-relative approximation of $f(x)$, i.e., a value $v$ that lies in the interval $[f(x)/(1+\eps), (1+\eps)f(x)]$.
The running time of $\mathcal{A}$ is polynomial in $n$, and $1/\eps$.
\end{definition}

We require the following result from \cite{GKM10}.

\begin{lemma}[\cite{GKM10}]
\label{lem:GKM-1} 
There is an FPTAS for $\#\KNAPSACK$.
\end{lemma}

In our work, we shall also use the following adaptation of the framework that was introduced by \cite{GKM10}.
We fix some terminology first.
For a set $S\subseteq\sqbra{n}$ its \emph{Hamming weight (or cardinality)} $\abs{S}$ is the number of $1$'s in its characteristic vector in $\bool^n$.
Given a vector $v$ in $\bool^n$ and a set $S=\brkts{i_1,\dots,i_k}\subseteq\sqbra{n}$, the \emph{projection of $v$ at $S$} is the string $v_{i_1}\cdots v_{i_k}$.

\begin{lemma}[Following~\cite{GKM10}; proof in \Cref{sec:appendix}]
\label{thm:GKM}
There is a deterministic algorithm that, given a $\#\KNAPSACK$ instance $(a_1,\dots,a_n,b)$ of total weight $W = \sum_i a_i + b$, $\delta > 0$, a $k$-size partition $S_1,\dots,S_k$ of $\sqbra{n}$ for some constant $k$, and $r_1,\dots,r_k\in [n]$ such that $r_i \leq |S_i|$, outputs a $(1+\delta)$-relative approximation of the number of $\KNAPSACK$ solutions such that their projections at sets $S_1,\dots,S_k$ have Hamming weights $r_1,\dots,r_k$, respectively.
The running time of this algorithm is polynomial in $n,\log W$, and $1/\delta$.
\end{lemma}

\section{Related Work}

\label{sec:relatedwork}

Most of the earlier works on computing the TV distance of succinctly represented high dimensional distributions are about the complexity and feasibility of {\em additive} approximations. Sahai and Vadhan~\cite{SV03} established in a seminal work that additively approximating the TV distance between two distributions that are samplable by Boolean circuits is hard for the complexity class $\SZK$ (Statistical Zero Knowledge).
The complexity class $\SZK$ is fundamental to cryptography and is believed to be computationally hard.
Subsequent works captured variations of this theme~\cite{GoldreichSV99,Malka15,DPV20}:
For example,~\cite{GoldreichSV99} showed that the problem of deciding whether a distribution samplable by a Boolean circuit is close or far from the uniform distribution is complete for the complexity class $\NISZK$ (Non-Interactive Statistical Zero Knowledge).
Another line of work focuses on finding the complexity of computing the TV distance between two hidden Markov models culminating in the results that it is undecidable whether the TV distance is greater than a threshold or not, and that it is $\#\P$-hard to additively approximate it~\cite{CortesMR07,LyngsoP02,hmm18}.

Complementing the above hardness results,~\cite{0001GMV20} designed efficient algorithms to additively approximate the TV distance of distributions that are efficiently samplable and also efficiently computable (meaning that their probability mass function is efficiently computable).
In particular, they designed efficient algorithms for {additively} approximating the TV distance of structured high dimensional distributions such as Bayesian networks, Ising models, and multivariate Gaussians.
In a similar vein, \cite{PM21} studied a related \emph{property testing} variant of TV distance, for distributions encoded by circuits.

Relative approximation of TV distance has received less attention compared to additive approximation.
Very recently, \cite{fgjw22} designed an FPRAS for relatively approximating the TV distance between two product distributions. The current work, in addition to showing that the exact computation of the TV distance between two product distributions is $\#\P$-complete, also presents deterministic approximation algorithms for a certain class of product distributions.

The work of \cite{fgjw22} relies on {\em coupling techniques} from probability theory (which appear inherently randomized), whereas we design deterministic algorithms via a reduction to $\#\KNAPSACK$ for which deterministic approximation schemes exist (see \Cref{sec:deterministic-approximation}):
\cite{DFK+93} gave a subexponential-time approximation algorithm for $\#\KNAPSACK$.
Later, \cite{MS04} designed an FPRAS for it.
Subsequently, \cite{Dye03} presented an FPRAS for $\#\KNAPSACK$ using simpler techniques.
Later independent works of Stefankovic, Vempala, and Vigoda~\cite{SVV12} and Gopalan, Klivans, and Meka~\cite{GKM10} gave FPTAS for $\#\KNAPSACK$.
Our work relies on the algorithms presented in~\cite{GKM10}.

Finally, a work that highlights some interesting aspects of product distributions is~\cite{SRG17}, whereby they show that computing $r$-th order statistics for product distributions is $\NP$-hard.

\section{The Hardness of Computing TV Distance}

\label{sec:hardness}

In this section, we establish hardness results.
We first show that $\ProdDtv$ is $\#\P$-complete.
Then we show that it is $\NP$-hard to approximate the TV-distance between distributions that are slightly more general than product distributions.
More specifically, we show that it is $\NP$-hard to design an approximation algorithm for $\BayesDtv$, even when the underlying Bayes nets are of in-degree two.

\subsection{\texorpdfstring{$\#\P$-Completeness of $\ProdDtv$}{}}

We establish that following result.

\begin{theorem}
\label{thm:hardness-Bern-products-intro}
$\ProdDtv$ is $\#\P$-complete.
This holds even when one of the distributions has at most $3$ distinct one-dimensional marginals.
\end{theorem}

\paragraph{Proof overview:}

We show the hardness in two steps.
In the first step, we introduce a problem called $\#\PMFEquals$ and show that it is $\#\P$-hard by a reduction from $\#\SubsetProd$.

$\#\PMFEquals$ is the following problem:
Given a probability vector $\prn{p_1,\dots,p_n}$ where $p_i\in\sqbra{0,1}$ and a number $v$, compute the number of $x\in\bool^n$ such that $P\cprn{x}=v$, where $P$ is the product distribution described by $\prn{p_1,\dots,p_n}$.

In the second step, we reduce $\#\PMFEquals$ to the problem of computing the TV distance of two product distributions.
For this, given a product distribution $P$, we construct product distributions $\hat P,\hat Q,P',Q'$ such that $\#\PMFEquals$ is a polynomial-time computable function of $\dtv\cprn{P',Q'}$ and $\dtv\cprn{\hat P,\hat Q}$.
In particular, we establish that for the case where $v<2^{-n}$ it is the case that $\abs{\brkts{x\mid P\cprn{x}=v}}$ is equal to
\[
\frac{d_{\TV}\cprn{P',Q'}-d_{\TV}\cprn{\hat P,\hat Q}}{2\beta v}
\]
for an appropriately chosen $\beta$ (similarly for the case where $v\geq2^{-n}$).
Thus if there is an efficient algorithm for $\ProdDtv$, then that algorithm can be used to efficiently solve the $\#\P$-complete problem $\#\SubsetProd$.

\paragraph{Detailed proof:}

We begin with the $\#\P$-hardness of $\#\PMFEquals$.

\begin{lemma}
\label{lemma:PMFEquals-is-hard}
$\#\PMFEquals$ is $\#\P$-hard.
\end{lemma}

\begin{proof}
We will reduce $\#\SubsetProd$ to $\#\PMFEquals$.
The result will then follow from the fact that $\#\SubsetProd$ is $\#\P$-hard.
Let $a_1,\dots,a_n$, and $T$ be the numbers of an arbitrary $\#\SubsetProd$ instance, namely $I_S$.
We will create a $\#\PMFEquals$ instance $I_P$ that has the same number of solutions as $I_S$.

Let $p_i:=\frac{{a_i}}{1+{a_i}}$ for every $i$ and $v:=T\prod_{i\in\sqbra{n}}(1-p_i)$, and observe that $a_i=\frac{p_i}{1-p_i}$.
For any set $S\subseteq [n]$, we have the following equivalences:
\[
\prod_{i\in S}a_i=T
\Leftrightarrow\prod_{i\in S}\frac{p_i}{1-p_i}=\frac{v}{\prod_{i\in\sqbra{n}}(1-p_i)}
\Leftrightarrow\prod_{i\in S}p_i\prod_{i\notin S}\prn{1-p_i}=v
\Leftrightarrow P\cprn{x}=v,
\]
where $x$ is such that $x_i=1$ if and only if $i\in S$.
This completes the proof.
\end{proof}

We now turn to \Cref{thm:hardness-Bern-products-intro}.

\begin{proof}[Proof of \Cref{thm:hardness-Bern-products-intro}]
We separately prove membership in $\#\P$ and $\#\P$-hardness.

\paragraph{Membership in $\#\P$:}

Let $P$ and $Q$ be two product distributions, specified by parameters $p_1,\dots,p_n$ and $q_1,\dots,q_n$, respectively.
Without loss of generality we shall assume that these parameters are fractions (as we only have some finite precision available).
The goal is to design a nondeterministic machine ${\cal N}$ that takes $p_1,\dots,p_n$ and $q_1,\dots,q_n$ as inputs and is such that the number of its accepting paths (normalized by an appropriate quantity; see~\cite{CSW20}) equals $\dtv(P,Q)$.

Let $M$ be the product of the denominators of all parameters $p_1,\dots,p_n,q_1,\dots,q_n$ and their complements $1-p_1,\dots,1-p_n,1-q_1,\dots,1-q_n$.
The non-deterministic machine ${\cal N}$ first guesses an element $i\in\bool^n$ in the sample space of $P$ and $Q$, computes $|P(i)-Q(i)|$ by using the parameters $p_1,\dots,p_n,q_1,\dots,q_n$, then guesses an integer $0\leq z\leq M$, and finally accepts if and only if $1\leq z\leq M|P(i)-Q(i)|$.
(Note that $M|P(i)-Q(i)|=\abs{M\cdot P(i)-M\cdot Q(i)}$ is an integer.)

It follows that
\[
\dtv(P,Q)
=\frac{1}{2}\sum_{i\in\bool^n}\abs{P\cprn{i}-Q\cprn{i}}
=\frac{\text{number of accepting paths of }{\cal N}}{2M}
\]
since the number of accepting paths of ${\cal N}$ is
\[
\sum_{i\in\bool^n}\prn{M\abs{P\cprn{i}-Q\cprn{i}}}
=M\sum_{i\in\bool^n}\abs{P\cprn{i}-Q\cprn{i}}
=M\cdot2\dtv(P,Q).
\]

\paragraph{$\#\P$-hardness:}

For establishing hardness, we will reduce $\#\PMFEquals$ to $\ProdDtv$.
The theorem will then follow from \Cref{lemma:PMFEquals-is-hard}.

Let $p_1,\dots,p_n$ and $v$ be the numbers in an arbitrary instance of $\#\PMFEquals$ where each $p_i$ is represented as an $m$-bit binary fraction.
With this, $P(x)$ can be represented as an $nm$-bit binary fraction.
Thus without loss of generality, we can assume that $v$ is also an $nm$-bit fraction.
We distinguish between two cases depending on whether $v<2^{-n}$ or $v\geq2^{-n}$.

\subparagraph{Case A: $v<2^{-n}$.}

First, we construct two distributions $\hat P=\Bern\cprn{\hat p_1}\otimes\cdots\otimes \Bern\cprn{\hat p_{n+1}}$ and $\hat Q=\Bern\cprn{\hat q_1}\otimes\cdots\otimes\Bern\cprn{\hat q_{n+1}}$ over $\{0,1\}^{n+1}$ as follows:
$\hat p_i:=p_i$ for $i\in\sqbra{n}$ and $\hat p_{n+1}:=1$; $\hat q_i:=1/2$ for $i\in\sqbra{n}$ and $\hat q_{n+1}:=v2^n$.
We have that $\dtv\cprn{\hat P,\hat Q}$ is equal to
\begin{align}
\sum_{x\in\{0,1\}^{n+1}}\max\cprn{0,\hat P\cprn{x}-\hat Q\cprn{x}}
&=\sum_{x}\max\cprn{0,P\cprn{x}-\frac{1}{2^n}v2^n}\nonumber\\
&=\sum_{x\in\bool^n}\max\cprn{0,P\cprn{x}-v}\nonumber\\
&=\sum_{x:P\prn{x}>v}\prn{P\cprn{x}-v}.
\label{eq:hardness}
\end{align}
We now define two more distributions $P'$ and $Q'$ over $\{0,1\}^{n+2}$, by making use of the following claim.

\begin{claim}
\label{clm:beta}
There exists a $\beta\in\prn{0,1}$ such that the following hold for all $x$:
If $P\cprn{x}<v$, then $P\cprn{x}\prn{\frac{1}{2}+\beta}<v\prn{\frac{1}{2}-\beta}$;
if $P\cprn{x}>v$, then $P\cprn{x}\prn{\frac{1}{2}-\beta}>v\prn{\frac{1}{2}+\beta}$.
In particular, we can take $\beta$ to be equal to $\frac{1}{2^{3nm}}$.
\end{claim}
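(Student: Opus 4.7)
The plan is to reformulate the two inequalities as a single threshold condition on $\beta$ and then exploit the fact that $P$ takes only finitely many values.

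First, I would rewrite both inequalities in a unified form. For $\beta\in(0,\tfrac12)$ the quantity $\tfrac12-\beta$ is positive, so the first condition $P(x)(\tfrac12+\beta)<v(\tfrac12-\beta)$ (assumed when $P(x)<v$) is equivalent to $v/P(x)>r(\beta)$, and the second condition $P(x)(\tfrac12-\beta)>v(\tfrac12+\beta)$ (assumed when $P(x)>v$) is equivalent to $P(x)/v>r(\beta)$, where
\[
r(\beta):=\frac{1/2+\beta}{1/2-\beta}=\frac{1+2\beta}{1-2\beta}.
\]
Note that $r:(0,\tfrac12)\to(1,\infty)$ is continuous, strictly increasing, and satisfies $\lim_{\beta\to 0^+}r(\beta)=1$.

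Second, I would exploit finiteness. The set $V:=\{P(x):x\in\{0,1\}^n\}$ has at most $2^n$ elements. Define $A:=\max\{w\in V:w<v\}$ and $B:=\min\{w\in V:w>v\}$, omitting either quantity if the corresponding set is empty. The two conditions in the claim then reduce to the single inequality $r(\beta)<\min(v/A,\,B/v)$, where the minimum ranges over whichever of the two quantities are defined. Since $A<v<B$, both $v/A$ and $B/v$ strictly exceed $1$, so the right-hand side is strictly greater than $1$; by continuity of $r$ at $0$, any sufficiently small $\beta\in(0,\tfrac12)\subset(0,1)$ witnesses the claim. In the degenerate case where neither $A$ nor $B$ exists we have $V\subseteq\{v\}$ and both conditional statements are vacuous, so any $\beta\in(0,1)$ works.

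The main (and essentially only) subtlety is the corner case in which all values $P(x)$ lie on one side of $v$ or coincide with $v$; this is handled by the convention above and makes the argument strictly easier. The claim is purely existential, so no further analysis is needed; if a polynomially short representation of $\beta$ is required later for the reduction to be efficient, the finite-precision hypothesis on the $p_i$ and on $v$ forces $\min(v/A,B/v)\ge 1+2^{-\mathrm{poly}(n)}$, from which a suitable $\beta$ with $\mathrm{poly}(n)$ bits can be read off by solving $r(\beta)=c$ for a rational $c$ slightly above $1$.
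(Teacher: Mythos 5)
Your proof is correct and follows essentially the same idea as the paper's: both arguments reduce to observing that the finitely many (finite-precision) values $P(x)\neq v$ are multiplicatively separated from $v$ by a gap bounded below, so a sufficiently small $\beta$ (of polynomially many bits) preserves the strict inequalities. Your write-up is in fact more explicit than the paper's one-line argument, and the only negligible point to watch is the division by $P(x)$ when $P(x)=0$, where the first implication holds trivially anyway.
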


\begin{proof}
For \Cref{clm:beta} to hold, observe that we want $\beta$ to be at most $\frac{|v-P(x)|}{v+P(x)}$ for every $x$, so that $P(x) \neq v$.
Since both $v$ and $P(x)$ have $nm$-bit representations, both $|v-P(x))|$ and $v+P(x)$ have $nm$-bit representations. 
Thus $\frac{|v-P(x)|}{v+P(x)}$ can be represented as a $2nm$-bit fraction.
Since this fraction is not zero, and the smallest $2nm$-bit fraction is $\frac{1}{2^{2nm}}$, choosing $\beta := \frac{1}{2^{3nm}}$ suffices.
\end{proof}

We now define two new distributions $P'$ and $Q'$ as follows:
$p_i':=p_i$ for $i\in [n]$, $p_{n+1}':=1$, and $p_{n+2}':=\frac{1}{2}+\beta$; $q_i':=\frac{1}{2}$ for $i\in [n]$, $q_{n+1}':=v2^n$, and $q_{n+2}':=\frac{1}{2}-\beta$ where $\beta$ is as in \Cref{clm:beta}.

We establish the following claim.

\begin{claim}
\label{clm:v-small}
It is the case that
$\abs{\brkts{x\mid P\cprn{x}=v}}$
equals
\[
\frac{d_{\TV}\cprn{P',Q'}-d_{\TV}\cprn{\hat P,\hat Q}}{2\beta v}.
\]
\end{claim}

\begin{proof}
We have that $d_{\TV}\cprn{P',Q'}$ is equal to
\begin{align*}
\sum_{x\in \{0, 1\}^{n+2}}\max\cprn{0,P'\cprn{x}-Q'\cprn{x}}
&=\sum_{x\in\{0,1\}^n}\max\cprn{0,P\cprn{x}\prn{\frac{1}{2}+\beta}-\frac{1}{2^n}v2^n\prn{\frac{1}{2}-\beta}}\\
&\qquad+\sum_{x\in\{0,1\}^n}\max\cprn{0,P\cprn{x}\prn{\frac{1}{2}-\beta}-\frac{1}{2^n}v2^n\prn{\frac{1}{2}+\beta}}\\
&=\sum_{x}\max\cprn{0,P\cprn{x}\prn{\frac{1}{2}+\beta}-v\prn{\frac{1}{2}-\beta}}\\
&\qquad+\sum_{x}\max\cprn{0,P\cprn{x}\prn{\frac{1}{2}-\beta}-v\prn{\frac{1}{2}+\beta}}\\
&=\sum_{x:P\prn{x}\geq v} P\cprn{x}\prn{\frac{1}{2}+\beta}-v\prn{\frac{1}{2}-\beta}\\
&\qquad+\sum_{x:P\prn{x}>v}P\cprn{x}\prn{\frac{1}{2}-\beta}-v\prn{\frac{1}{2}+\beta}\\
&=\sum_{x:P\prn{x}= v}P\cprn{x}\prn{\frac{1}{2}+\beta}-v\prn{\frac{1}{2}-\beta}\\
&\qquad+\sum_{x:P\prn{x}> v} P\cprn{x}\prn{\frac{1}{2}+\beta}-v\prn{\frac{1}{2}-\beta}\\
&\qquad+\sum_{x:P\prn{x}>v}P\cprn{x}\prn{\frac{1}{2}-\beta}-v\prn{\frac{1}{2}+\beta}\\
&=2\beta v\abs{\brkts{x\mid P\cprn{x}=v}}
+\sum_{x:P\prn{x}>v}\prn{P\cprn{x}-v}.
\end{align*}
The result now follows from \Cref{eq:hardness}.
The first equality follows from the definitions of $P'$ and $Q'$ (since $p_{n+1}'= 1$).
Note that when for every $x$ with $P(x) < v$, by Claim~\ref{clm:beta}, $P(x)(\frac{1}{2}+\beta) < v(\frac{1}{2}-\beta)$ and if $P(x) \geq v$, then $P(x)(\frac{1}{2}+\beta) \geq  v(\frac{1}{2}-\beta)$.
Also when $P(x) \leq  v$, $P(x)(\frac{1}{2}+\beta)$ is at most $v(\frac{1}{2}-\beta)$ and when $P(x) > v$, by Claim~\ref{clm:beta}, we have $P(x)(\frac{1}{2}-\beta) > v(\frac{1}{2}+\beta)$.
These imply the third equality.
The rest of the equalities holds by algebraic manipulations.
\end{proof}

For that matter $|\{x\mid P(x) = v\}|$ can be computed by 
computing $\dtv(P',Q')$ and $\dtv(\hat{P},\hat{Q})$. 
Thus the proof in this case follows by \Cref{lemma:PMFEquals-is-hard}.

\subparagraph{Case B: $v \geq 2^{-n}$.}

First, let us define distributions $\hat P=\Bern\cprn{\hat p_1}\otimes\cdots\otimes\Bern\cprn{\hat p_n}$ and $\hat Q=\Bern\cprn{\hat q_1}\otimes\cdots\otimes\Bern\cprn{\hat q_n}$ as follows: $\hat p_i:=p_i$ for $i\in\sqbra{n}$, $\hat p_{n+1}:=\frac{1}{v2^n}$; $\hat q_i:=\frac{1}{2}$ for $i\in\sqbra{n}$, and $\hat q_{n+1}:=1$.

We now have that $d_{\TV}\cprn{\hat P,\hat Q}$ is equal to $\frac{1}{2}\sum_{x}\abs{\hat P\cprn{x}-\hat Q\cprn{x}}$ or
\begin{align*}
\sum_{x}\max\cprn{0,\hat P\cprn{x}-\hat Q\cprn{x}}
&=\sum_{x}\max\cprn{0,P\cprn{x}\frac{1}{v2^n}-\frac{1}{2^n}}
+\sum_{x}\max\cprn{0,P\cprn{x}\prn{1-\frac{1}{v2^n}}}\\
&=\sum_{x}\max\cprn{0,P\cprn{x}\frac{1}{v2^n}-\frac{1}{2^n}}
+1-\frac{1}{v2^n}.
\end{align*}
As earlier, we define two more distributions $P'$ and $Q'$, by making use of \Cref{clm:beta}.
The new distributions $P'$ and $Q'$ are such that $p_i':=p_i$ for $i\in\sqbra{n}$, $p_{n+1}':=\frac{1}{v2^n}$, and $p_{n+2}':=\frac{1}{2}+\beta$; $q_i':=1/2$ for $i\in\sqbra{n}$, $q_{n+1}':=1$, and $q_{n+2}':=\frac{1}{2}-\beta$.

We establish the following claim.

\begin{claim}
\label{clm:v-big}
We have that $\abs{\brkts{x\mid P\cprn{x}=v}}$ is equal to
\[
\frac{2^{n-1}}{\beta}\prn{d_{\TV}\cprn{P',Q'}-d_{\TV}\cprn{\hat P,\hat Q}}.
\]
\end{claim}

\begin{proof}
By our previous discussion we know that $d_{\TV}\cprn{P',Q'}$ is equal to
\begin{align*}
\sum_{x}\max\cprn{0,P'\cprn{x}-Q'\cprn{x}}
&=\sum_{x}\max\cprn{0,P\cprn{x}\frac{1}{v2^n}\prn{\frac{1}{2}+\beta}-\frac{1}{2^n}\prn{\frac{1}{2}-\beta}}\\
&\qquad+\sum_{x}\max\cprn{0,P\cprn{x}\frac{1}{v2^n}\prn{\frac{1}{2}-\beta}-\frac{1}{2^n}\prn{\frac{1}{2}+\beta}}\\
&\qquad+\sum_{x}\max\cprn{0,P\cprn{x}\prn{1-\frac{1}{v2^n}}\prn{\frac{1}{2}+\beta}}\\
&\qquad+\sum_{x}\max\cprn{0,P\cprn{x}\prn{1-\frac{1}{v2^n}}\prn{\frac{1}{2}-\beta}}\\
&=\sum_{x}\max\cprn{0,P\cprn{x}\frac{1}{v2^n}\prn{\frac{1}{2}+\beta}-\frac{1}{2^n}\prn{\frac{1}{2}-\beta}}\\
&\qquad+\sum_{x}\max\cprn{0,P\cprn{x}\frac{1}{v2^n}\prn{\frac{1}{2}-\beta}-\frac{1}{2^n}\prn{\frac{1}{2}+\beta}}\\
&\qquad+\sum_{x}\max\cprn{0,P\cprn{x}\prn{1-\frac{1}{v2^n}}}\\
&=\sum_{x:P\prn{x}\geq v}\max\cprn{0,P\cprn{x}\frac{1}{v2^n}\prn{\frac{1}{2}+\beta}-\frac{1}{2^n}\prn{\frac{1}{2}-\beta}}\\
&\qquad+\sum_{x:P\prn{x}>v}\max\cprn{0,P\cprn{x}\frac{1}{v2^n}\prn{\frac{1}{2}-\beta}-\frac{1}{2^n}\prn{\frac{1}{2}+\beta}}\\
&\qquad+1-\frac{1}{v2^n},
\intertext{by \Cref{clm:beta}, or}
\sum_{x}\max\cprn{0,P'\cprn{x}-Q'\cprn{x}}
&=\sum_{x:P\prn{x}=v}\max\cprn{0,P\cprn{x}\frac{1}{v2^n}\prn{\frac{1}{2}+\beta}-\frac{1}{2^n}\prn{\frac{1}{2}-\beta}}\\
&\qquad+\sum_{x:P\prn{x}>v}\max\cprn{0,P\cprn{x}\frac{1}{v2^n}\prn{\frac{1}{2}+\beta}-\frac{1}{2^n}\prn{\frac{1}{2}-\beta}}\\
&\qquad+\sum_{x:P\prn{x}>v}\max\cprn{0,P\cprn{x}\frac{1}{v2^n}\prn{\frac{1}{2}-\beta}-\frac{1}{2^n}\prn{\frac{1}{2}+\beta}}\\
&\qquad+1-\frac{1}{v2^n}\\
&=2\beta\frac{1}{2^n}\abs{\brkts{x\mid P\cprn{x}=v}}
+\sum_{x}\max\cprn{0,P\cprn{x}\frac{1}{v2^n}-\frac{1}{2^n}}
+1-\frac{1}{v2^n}\\
&=\frac{\beta}{2^{n-1}}\abs{\brkts{x\mid P\cprn{x}=v}}+d_{\TV}\cprn{\hat P,\hat Q}
\end{align*}
by \Cref{clm:v-big}.
That is, $\abs{\brkts{x\mid P\cprn{x}=v}}$ is equal to
\[
\frac{2^{n-1}\prn{d_{\TV}\cprn{P',Q'}-d_{\TV}\cprn{\hat P,\hat Q}}}{\beta}.
\qedhere
\]
\end{proof}

Thus also in this case the proof follows by \Cref{lemma:PMFEquals-is-hard}.
Finally, note that in either case the distribution $\hat{Q}$ has $2$ distinct one-dimensional marginals and $Q'$ has $3$ distinct one-dimensional marginals.
\end{proof}

\subsection{Hardness of Approximating \texorpdfstring{$\BayesDtv$}{}}

In this section, we prove the following.

\begin{theorem}
\label{thm:dTV-zero-or-not-NP-complete-intro}
Given two probability distributions $P$ and $Q$ that are defined by Bayes nets of in-degree at least two, it is $\NP$-complete to decide whether $\dtv(P,Q)\neq 0$ or not.
Hence the problem of relatively approximating $\BayesDtv$ is $\NP$-hard.
\end{theorem}

\begin{proof}
The proof gives a reduction from the satisfiability problem for CNF formulas (which is $\NP$-hard~\cite{Coo71}) to deciding whether the total variation distance between two Bayes nets distributions is non-zero or not.
Let $F$ be a CNF formula viewed as a Boolean circuit.
Assume $F$ has $n$ input variables $x_1,\ldots, x_n$ and $m$ gates $\Gamma=\brkts{y_1,\ldots,y_m}$, where $\Gamma$ is topologically sorted with $y_m$ being the output gate.
We will define two Bayes net distributions on the same directed acyclic graph $G$ which, intuitively, is the graph of $F$.
(By a graph of a formula we mean the directed acyclic graph that captures the circuit structure of $F$, whereby the nodes are either AND, OR, NOT, or variable gates, and the edges correspond to wires connecting the gates.)

The vertex set of $G$ is split into two sets $\mathcal{X}$ and $\mathcal{Y}$, and a node $Z$.
The set ${\mathcal X}=\brkts{X_i}_{i=1}^n$ contains $n$ nodes with node $X_i$ corresponding to variable $x_i$ and the set ${\mathcal Y}=\brkts{Y_i}_{i=1}^m$ contains $m$ nodes with each node $Y_i$ corresponding to gate $y_i$.
So totally there are $n+m+1$ nodes.
There is directed edge from node $V_i$ to node $V_j$ if the gate/variable corresponding to $V_i$ is an input to $V_j$.

The distributions $P$ and $Q$ on $G$ are given by CPTs defined as follows.
Each $X_i$ is a uniformly random bit.
For each $Y_i$, its conditional probability table (CPT) is deterministic:
For each of the setting of the parents $Y_j, Y_k$ the variable $Y_i$ takes the value of the gate $y_i$ for that setting of its inputs $y_j,y_k$.
Finally, in the distribution $P$ the variable $Z$ is a random bit and in the distribution $Q$ the variable $Z$ is defined by the value of $Y_m$ OR-ed with a random bit.

Note that the formula $F$ computes a Boolean function on the input variables.
Let $f:\bool^n\to\bool$ be this function.
We extend $f$ to $\{0,1\}^{m}$ (i.e., $f:\bool^n\to\bool^m$) to also include the values of the intermediate gates.

With this notation for any binary string $XYZ$ of length $n+m+1$, both $P$ and $Q$ have a probability $0$ if $Y \neq f(X)$.
(In the derivation of TV distance that follows, we shall assume that $Y=f\cprn{X}$.)
Let $A:=\brkts{x\mid F\cprn{x}=1}$ and $R:=\brkts{x\mid F\cprn{x}=0}$.
Thus $2\dtv(P,Q)$ can be written as
\begin{align*}
\sum_{X,f(X),Z} |P - Q|
= \sum_{X\in A,Z} |P - Q| + \sum_{X\in R,Z} |P-Q|
\end{align*}
where we have abused the notation $P$ and $Q$ to denote the probabilities $P\cprn{X,f\cprn{X},Z}$ and $Q\cprn{X,f\cprn{X},Z}$, respectively.

We will now compute each sum separately.
First, we have that $\sum_{X \in A,Z} |P - Q|$ is equal to $\sum_{X\in A, Z=0} |P-Q| + \sum_{X\in A, Z=1} |P-Q|$ (taking cases for the value of $Z$) or
\[
\sum_{X \in A, Z=0} \abs{\frac{1}{2^{n+1}} - 0}
+ \sum_{X \in A, Z=1} \abs{\frac{1}{2^{n+1}} - \frac{1}{2^n}}
\]
which is equal to $\frac{|A|}{2^n}$; then, we have that the quantity $\sum_{X \in R,Z} |P - Q|$ is equal to $\sum_{X\in R, Z=0} |P-Q| + \sum_{X\in R, Z=1} |P-Q|$ (taking cases for the value of $Z$) or
\[
\sum_{X \in R, Z=0} \abs{\frac{1}{2^{n+1}} - \frac{1}{2^{n+1}}}
+ \sum_{X \in R, Z=1} \abs{\frac{1}{2^{n+1}} - \frac{1}{2^{n+1}}}
\]
which is equal to $0$.

Therefore $\dtv(P,Q) = {|A|}/{2^{n+1}}$.
The membership in $\NP$ follows because $\dtv(P,Q)\neq 0$ if and only if there is an $X$ so that $P(X) \neq Q(X)$; this can be checked in polynomial time for Bayes distributions over finite alphabets.
The $\NP$-hardness follows because the arbitarry CNF formula $F$ is satisfiable if and only if $|A|\neq 0$ if and only if $\dtv(P,Q)\neq 0$.

The $\NP$-hardness of relative approximation of $\BayesDtv$ follows as a relative approximation $\dtv(P,Q)$ is non-zero if and only if $\dtv(P,Q)\neq 0$.
\end{proof}

\section{Deterministic Approximation Schemes}

\label{sec:deterministic-approximation}

It is an open problem to design an FPTAS for \ProdDtv. 
In this section, we report progress on this by designing {\em deterministic} approximation algorithms for a few interesting subcases.
In particular, we first provide an FPTAS for computing the total variation distance between an arbitrary product distribution $P$ and the uniform distribution $\U$, and then extend to the case where $Q$ has $O(1)$ distinct $q_i's$.

\subsection{Algorithm for \ProdDtvUnif}

\label{sec:FPTAS-Uniform}

We establish the following theorem.

\begin{theorem}
\label{thm:pqhalf}
There is an FPTAS for $\dtv(P,\U)$ where $P=\Bern(p_1)\otimes\cdots\otimes\Bern(p_n)$.
\end{theorem}

\paragraph{Proof overview:}

The idea is to reduce an instance of \ProdDtvUnif\ to several instances of $\#\KNAPSACK$.
Since the latter problem has an FPTAS (\Cref{lem:GKM-1}), the theorem follows.

For every subset $S \subseteq [n]$, we assign a non-negative weight $Y_S$ and show that a ``normalized'' $\dtv(P, \U)$ is equal to $\sum_S Y_S$.
We express the problem of computing this summation as multiple \#\KNAPSACK\ instances.

For this, we first show that each non-zero $Y_S$ lies in the range $[1, V)$ (for an appropriate $V$ that depends on the granularity of our precision).
We divide the interval $[1, V)$ into subintervals of the form $\sqprn{(1+\eps)^{i-1},(1+\eps)^i}$ for various (yet polynomially many) values of $i$.
Let $k_i$ be the number of sets $S$ for which $Y_S$ lies in the $i$-th interval.
Then $\sum_{i\in[\poly{n}]} k_i (1+\varepsilon)^i$ yields an $(1+\eps)$ approximation of the normalized $\dtv\cprn{P,\U}$.
However, computing each $k_i$ exactly is also $\#\P$-hard.
Thus we seek an approximation of each $k_i$.

We use a re-organization trick of summations and additional techniques to express this as several $\#\KNAPSACK$ instances.
Setting the approximation parameter for $\#\KNAPSACK$ to $\eps$, this leads to a $(1+\eps^2)$-approximation algorithm.
By setting $\eps:=\delta/2$, we get an $(1+\delta)$-approximation algorithm.

\paragraph{Detailed proof:}

We now give detailed technical proof.
First we assume, without loss of generality, that no $p_i$ is equal to $1/2$ since otherwise we can ignore these coordinates $i$.
Moreover, again without loss of generality, we assume that $p_i>1/2$ for all $i$, since otherwise we can flip $0$ and $1$ in the $i$-th coordinate of both $P$ and $\U$.

Let $M$ be the set of indices $i\in\sqbra{n}$ such that $p_i=1$. Let $A:=\prod_{i\notin M}(1-p_i)$ and $W:=\frac{1}{2^n}\prod_{i\notin M} \frac{1}{1-p_i}$ be constants, and $W_{S}:=\prod_{i\in S\setminus M}\frac{p_i}{1-p_i}$, $W_{\emptyset}:=1$.

\begin{claim}
It is the case that $\dtv(P, \U)$ is equal to $A\cdot\sum_{S\subseteq [n]:M\subseteq S}\max\left(0,W_S-W\right)$.
\end{claim}

\begin{proof}
We have that $d_{\TV}\cprn{P,\U}$ equals $\sum_{x \in \{0,1\}^n} \max(0,P(x)-\U(x))$ or
\begin{align*}
&\sum_{S\subseteq [n]}\max\cprn{0,\prod_{i\in S}p_i\prod_{i\notin S}(1-p_i) - \frac{1}{2^n}}\\
&\qquad=\sum_{S\subseteq[n]:M\subseteq S}\max\cprn{0,\prod_{i\in S}p_i\prod_{i\notin S}(1-p_i) - \frac{1}{2^n}}\\
&\qquad=\prod_{i\notin M}(1-p_i)
\cdot\sum_{S\subseteq [n]:M\subseteq S}\max\!\left(0,\prod_{i\in S\setminus M}\frac{p_i}{1-p_i}-\frac{1}{2^n}\prod_{i\notin M} \frac{1}{1-p_i}\right)\\
&\qquad=A\sum_{S\subseteq [n]:M\subseteq S}\max\left(0,W_S-W\right).
\end{align*}
The second equality holds by the definition of $M$.
The third equality holds as
$\prod_{i\in S}{p_i}
=\prod_{i\in S\setminus M}{p_i}\prod_{i\in S\,\cap\, M}{p_i}
=\prod_{i\in S\setminus M}{p_i}$.
\end{proof}

For notational simplicity, we assume that each $p_i$ is represented using $\ell:=\cpoly{n}$ bits.
Thus each non-zero term $\max(0,P(x)-\U(x))$ of $\dtv\cprn{P,\U}$ contributes at least
\(
m_0:=2^{-\ell}=2^{-\poly{n}}
\)
to $\dtv\cprn{P,\U}$.
Hence for any $S$ for which $\max\left(0,W_S-W\right)>0$, its value is at least $m_{\min}:={m_0}/{A}\geq2^{-\poly{n}}$.
Moreover, $\max\left(0,W_S-W\right)$ is at most $m_{\max}$, defined as
\[
W_S
=\prod_{i\in S\setminus M}\frac{p_i}{1-p_i}
\leq\prod_{i\in S\setminus M}\frac{1-2^{-\poly{n}}}{2^{-\poly{n}}}
\leq {2^{\poly{n}}}
\]
by the facts that $p_i\leq1-2^{-\poly{n}}$ for $i\notin M$ (since we use some finite precision of $\cpoly{n}$ bits), $\prn{1-x}/x$ is non-decreasing in $x$, and $\prn{1-x}/x\leq1/x$ for all $x$.
Therefore $m_{\max}\leq2^{\poly{n}}$.

Consider now $Y_S:=\max\left(0,W_S-W\right)/{m_{\min}}$ which lies in $\sqprn{1,V}$ for some $V\leq m_{\max}/m_{\min}\leq2^{\poly{n}}$, and let
\[
\sqprn{1,V}=\bigcup_{i=0}^{u-1}\sqprn{(1+\eps)^{i},(1+\eps)^{i+1}}
\]
be a set of subintervals for integers $0 \le i \le u-1=\lceil \log_{1+\varepsilon} V\rceil-1\leq\cpoly{n}-1\leq\cpoly{n}$ and some $0<\varepsilon<1$ that we will fix later (as a function of $\delta$).

Let the number of sets $S$ such that $Y_S$ is in $\sqprn{1,(1+\eps)^{i}}$ be $n_i$.
Let the average contribution in the range $\sqprn{(1+\eps)^{i-1},(1+\eps)^{i}}$ be $B_i$.
We have the following equation:
\begin{align}
\frac{\dtv(P,\U)}{A\cdot m_{\min}}
=n_1B_1+(n_2-n_1)B_2
+(n_3-n_2)B_3+\dots+(n_u-n_{u-1})B_u.
\label{eq:dtv-approx}
\end{align}
Since $(1+\eps)^{i-1} \le B_i < (1+\eps)^i$, the following estimate $d$ is a $(1+\eps)$-approximation of the RHS:
\begin{align}
d:=n_1(1+\eps)+(n_2-n_1)(1+\eps)^2+(n_3-n_2)(1+\eps)^3
+\dots+(n_u-n_{u-1})(1+\eps)^u.
\label{eq:d-def}
\end{align}
We use a reorganization trick similar to~\cite{CM19}; see \Cref{fig:reorganization}.

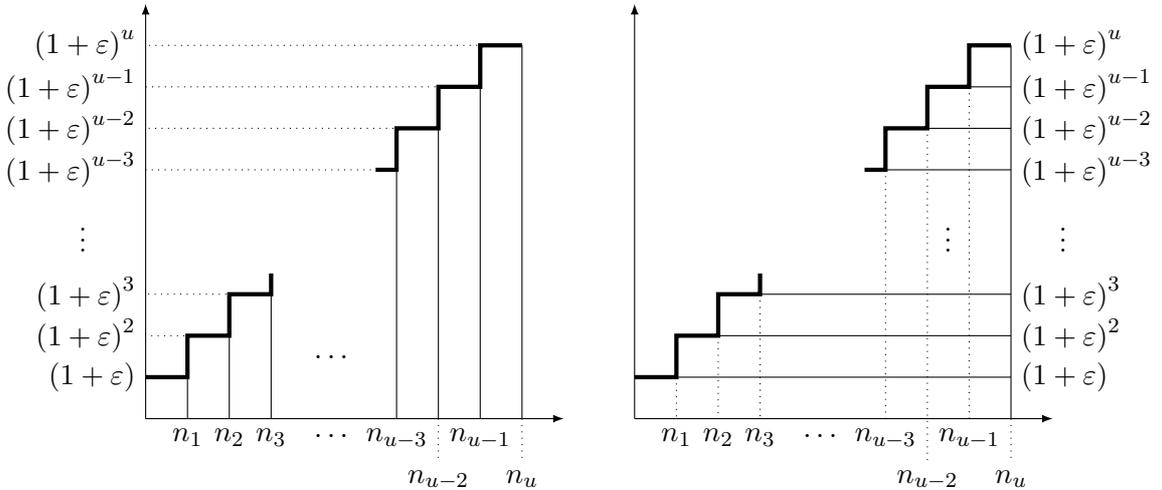
\begin{figure}[ht]
\centering
\begin{tikzpicture}[scale=0.55]
\draw[-latex] (0,0) -> (0,10);
\draw[-latex] (0,0) -> (10,0);
\draw[ultra thick] (0,1) -- (1,1) -- (1,2) -- (2,2) -- (2,3) -- (3,3) -- (3,3.5);
\draw[ultra thick] (5.5,6) -- (6,6) -- (6,7) -- (7,7) -- (7,8) -- (8,8) -- (8,9) -- (9,9);
\draw (9,9) -- (9,0);
\draw (1,1) -- (1,0);
\draw (2,2) -- (2,0);
\draw (3,3) -- (3,0);
\draw (6,6) -- (6,0);
\draw (7,7) -- (7,0);
\draw (8,8) -- (8,0);
\node at (4.5,1.5) {$\dots$};
\node at (4.5,-0.4) {$\dots$};
\node at (-1.5,4.5) {$\vdots$};
\node[below] at (1,0) {$n_1$};
\node[below] at (2,0) {$n_2$};
\node[below] at (3,0) {$n_3$};
\node[below] at (6,0) {$n_{u-3}$};
\node[below] at (7,-1) {$n_{u-2}$};
\node[below] at (8,0) {$n_{u-1}$};
\node[below] at (9,-1) {$n_u$};
\draw[dotted] (9,0) -- (9,-1);
\draw[dotted] (7,0) -- (7,-1);
\node[left] at (0,1) {$\prn{1+\varepsilon}$};
\node[left] at (0,2) {$\prn{1+\varepsilon}^2$};
\node[left] at (0,3) {$\prn{1+\varepsilon}^3$};
\node[left] at (0,6) {$\prn{1+\varepsilon}^{u-3}$};
\node[left] at (0,7) {$\prn{1+\varepsilon}^{u-2}$};
\node[left] at (0,8) {$\prn{1+\varepsilon}^{u-1}$};
\node[left] at (0,9) {$\prn{1+\varepsilon}^{u}$};
\draw[dotted] (9,9) -- (0,9);
\draw[dotted] (7,7) -- (0,7);
\draw[dotted] (8,8) -- (0,8);
\draw[dotted] (6,6) -- (0,6);
\draw[dotted] (2,2) -- (0,2);
\draw[dotted] (3,3) -- (0,3);
\end{tikzpicture}
\qquad
\begin{tikzpicture}[scale=0.55]
\draw[-latex] (0,0) -> (0,10);
\draw[-latex] (0,0) -> (10,0);
\draw[ultra thick] (0,1) -- (1,1) -- (1,2) -- (2,2) -- (2,3) -- (3,3) -- (3,3.5);
\draw[ultra thick] (5.5,6) -- (6,6) -- (6,7) -- (7,7) -- (7,8) -- (8,8) -- (8,9) -- (9,9);
\draw (9,9) -- (9,0);
\draw (1,1) -- (9,1);
\draw (2,2) -- (9,2);
\draw (3,3) -- (9,3);
\draw (6,6) -- (9,6);
\draw (7,7) -- (9,7);
\draw (8,8) -- (9,8);
\node at (7.5,4.5) {$\vdots$};
\node at (4.5,-0.4) {$\dots$};
\node at (10.25,4.5) {$\vdots$};
\node[below] at (1,0) {$n_1$};
\node[below] at (2,0) {$n_2$};
\node[below] at (3,0) {$n_3$};
\node[below] at (6,0) {$n_{u-3}$};
\node[below] at (7,-1) {$n_{u-2}$};
\node[below] at (8,0) {$n_{u-1}$};
\node[below] at (9,-1) {$n_u$};
\draw[dotted] (9,0) -- (9,-1);
\draw[dotted] (7,7) -- (7,-1);
\draw[dotted] (8,8) -- (8,0);
\draw[dotted] (6,6) -- (6,0);
\draw[dotted] (1,1) -- (1,0);
\draw[dotted] (2,2) -- (2,0);
\draw[dotted] (3,3) -- (3,0);
\node[right] at (9,1) {$\prn{1+\varepsilon}$};
\node[right] at (9,2) {$\prn{1+\varepsilon}^2$};
\node[right] at (9,3) {$\prn{1+\varepsilon}^3$};
\node[right] at (9,6) {$\prn{1+\varepsilon}^{u-3}$};
\node[right] at (9,7) {$\prn{1+\varepsilon}^{u-2}$};
\node[right] at (9,8) {$\prn{1+\varepsilon}^{u-1}$};
\node[right] at (9,9) {$\prn{1+\varepsilon}^{u}$};
\end{tikzpicture}
\caption{Reorganization trick: The area below the thick curve is calculated in two different ways.}
\label{fig:reorganization}
\end{figure}

By using the reorganization trick we have, by \Cref{eq:d-def},
\begin{align}
d&=\prn{(1+\eps)^u-(1+\eps)^{u-1}}(n_u-n_{u-1})\nonumber\\
&\qquad+\prn{(1+\eps)^{u-1}-(1+\eps)^{u-2}}(n_u-n_{u-2})
+\dots+(1+\eps)n_u.
\label{eqn:DI}
\end{align}
Therefore it suffices to estimate $n_u-n_j$ for every $1\le j\le u-1$.
We know that $n_u=2^{n-\abs{M}}$.
By definition, $t_j:=n_u-n_j$ counts the sets $S$ such that $Y_S\geq(1+\eps)^j$.
Let $Y:=\prod_{\sqbra{n}\setminus M}Y_{\brkts{i}}$ and observe that $Y_S \ge (1+\eps)^j$ if and only if $Y_{\prn{[n]\setminus M} \setminus S}\le Y/(1+\eps)^j$.
Due to this bijection, $t_j$ also counts the number of sets $S$ such that $Y_S\leq Y/(1+\eps)^j$.

For every $j$, if $Y/(1+\eps)^j<1$ we define $t_j:=0$.
Otherwise, we introduce logarithms to reduce the problem of estimating the number of sets $S\subseteq\sqbra{n}$ such that $\max\left(0,W_S-W\right)/{m_{\min}}=Y_S\leq Y/(1+\eps)^j$ to a $\#\KNAPSACK$ instance
\[
\log W_S\leq\log\cprn{m_{\min}\cdot Y/(1+\eps)^j+W}
\]
which can be more commonly written as $\sum_{i\in S\setminus M}\log w_i\leq B$ for $w_i:={p_i}/\prn{1-p_i}$ (by the definition of $W_S$) and $B:=\log\cprn{m_{\min}\cdot Y/(1+\eps)^j+W}$.
(Note that the latter problem can be reformulated as counting the number of sets $S\subseteq\sqbra{n}\setminus M$ such that $\sum_{i\in S}w_i\leq B$.)

Using \Cref{lem:GKM-1} and \Cref{eqn:DI} we can estimate $t_j$ up to a $(1+\eps)$-approximation in deterministic polynomial time, which in turn would give us a $\prn{1+\varepsilon}$-approximation for $d$ and for that matter a $\prn{1+\varepsilon}^2$-approximation for $\dtv(P,\U)$ by \Cref{eq:dtv-approx}.
Finally, we set $\varepsilon := \Omega(\delta/2)$ so that $(1+\varepsilon)^2 \leq (1+\delta)$ in order to get an approximation ratio of $\prn{1+\delta}$.

The running time is polynomial in $n$ and $1/\delta$ because we ran a polynomial-time approximation algorithm for $\#\KNAPSACK$ polynomially many times.

\subsection{Algorithm for \texorpdfstring{\ProdDtv\ Where $Q$ Has $O(1)$ Parameters}{}}

We will now extend to the case where $Q$ has at most $k$ distinct parameters.
Observe that $\U$ can be viewed as having $k=1$ distinct parameters (equal to $1/2$).
Without loss of generality, let $Q=\bigotimes_i \Bern(q_i) = \Bern(a_1)^{z_1}\otimes\cdots\otimes\Bern(a_k)^{z_k}$ such that $z_1+\dots+z_k=n$.
The main result of this section is the following.

\begin{theorem}
\label{prop:constant-q_i's}
There is an FPTAS for $d_\TV(P,Q)$ where $P$ is an arbitrary product distribution and $Q = \bigotimes_i \Bern(q_i) = \Bern(a_1)^{z_1}\otimes\cdots\otimes \Bern(a_k)^{z_k}$ such that $z_1+\dots+z_k=n$.
\end{theorem}

For simplicity of exposition, we will show first the result for the simpler case when $Q=\Bern(a)^{n}$.

\begin{theorem}
\label{prop:Bern-a}
There is an FPTAS for estimating $\dtv(P,Q)$ where $P$ is an arbitrary product distribution and $Q = \Bern(a)^{n}$ for an $0\leq a\leq 1$.
\end{theorem}

Our approach is to reduce this problem to $\#\KNAPSACK$ with fixed Hamming weights.
If $p_i\geq1/2$ for all $i$, then the latter problem admits an FPTAS due to \Cref{thm:GKM}.
In the scenario where there is an $i$ such that $p_i<1/2$ (in this case the respective $\KNAPSACK$ weight $w_i$ is negative; see our discussion below), we can switch $0$ and $1$ in such coordinates to obtain $\Bern(1-p_i)$ and $\Bern(1-a)$, respectively.
This transformation does not change the distance. 
We show that such instances can be reduced to $\#\KNAPSACK$ with two fixed Hamming weights.

\begin{proof}[Proof of \Cref{prop:Bern-a}]
Let $M$ be the set of indices $i\in\sqbra{n}$ such that $p_i=1$.
We have that $\dtv(P,Q)$ is equal to $\sum_x\max\left(0,P\cprn{x}-Q\cprn{x}\right)$ or
\begin{align*}
&\sum_{S\subseteq [n]:M\subseteq S} \max\left(0,\prod_{i\in S} p_i \prod_{i\notin S} (1-p_i)-a^{|S|}(1-a)^{n-|S|}\right)\\
&\qquad=\prod_{i\notin M} (1-p_i) \sum_{S\subseteq [n]:M\subseteq S} \max\!\left(0,\prod_{i\in S\setminus M} \left(\frac{p_i}{1-p_i}\right)-\frac{1}{\prod_{i\notin M} (1-p_i)} (1-a)^n \left(\frac{a}{1-a}\right)^{|S|}\right)\\
&\qquad=A \sum_{S\subseteq [n]:M\subseteq S}\max\!\left(0,\prod_{i\in S\setminus M} w_i -B\left(\frac{a}{1-a}\right)^{|S|}\right)
\end{align*}
for $A:=\prod_{i\notin M} (1-p_i)$, $w_i:=p_i/\prn{1-p_i}$, and $B:=\prn{1-a}^n/\prod_{i\notin M}\prn{1-p_i}$.

An argument similar to that of \Cref{thm:pqhalf} (based again on the fact that we use finite precision) can be used to show that a normalized version of $\dtv\cprn{P,Q}$ lies in some interval $\sqprn{1,V}$ for $V\leq2^{\poly{n}}$ which again we perceive as $\sqprn{1,V}=\bigcup_{i=1}^{u}[(1+\eps)^i,(1+\eps)^{i+1})$ for $u\leq\cpoly{n}$.
This enables us to use the same approach as in the proof of \Cref{thm:pqhalf}.
Specifically, we approximate $\dtv\cprn{P,Q}$ as $A\cdot m_{\min}\cdot d$ where $d$ is defined as in \Cref{eq:d-def}.
We then approximate $d$ as in the proof of \Cref{thm:pqhalf}, with a notable difference being that now we have to use \Cref{thm:GKM} instead of \Cref{lem:GKM-1} for the $\#\KNAPSACK$ instances to which we reduce the estimation of $\dtv\cprn{P,Q}$.

Therefore, following \Cref{thm:pqhalf}, it would suffice to estimate $d$.
According to \Cref{eqn:DI}, we shall approximate the quantities $t_j:=n_{u}-n_j$ ($n_i$'s as in the proof of \Cref{thm:pqhalf}), which here count the sets $S\subseteq\sqbra{n}$ such that
\begin{align}
\prod_{i\in S\setminus M}w_i
\le B\left(\frac{a}{1-a}\right)^{|S|}+C
=:D
\label{eq:knapsack}
\end{align}
for $C=C\cprn{j}=m_{\min}\cdot Y/(1+\eps)^j$ and the corresponding values of $m_{\min}$ and $Y$ (see the proof of \Cref{thm:pqhalf} for definitions).
Notice how the cardinality $\abs{S}$ of $S$ comes up in the RHS of \Cref{eq:knapsack}.
Since this quantity is not known beforehand, we shall consider cases $\abs{S}=1,\dots,n$ in the $\#\KNAPSACK$ instances that we will solve.
This is the reason we use \Cref{thm:GKM} instead of \Cref{lem:GKM-1}.

First assume that $w_i\ge 1$ for every $i$ (meaning that $p_i\geq1/2$ or $\log w_i\geq0$ for all $i$); we take logarithms (as in \Cref{thm:pqhalf}) to reduce this to a $\#\KNAPSACK$ instance (i.e., $\sum_{i\in S\setminus M}\log w_i\leq\log D$) for every fixed $|S|=1,\dots,n$.
The latter problems can be then solved by the algorithm of \Cref{thm:GKM} for $k=1$ (in the notation of \Cref{thm:GKM}).
Finally, we take the sum of all these counts over the possible values of $\abs{S}$ as our estimate of $t_j$.
Then our estimate for $d$ will come from \Cref{eqn:DI} for $n_u=2^{n-|M|}$.

Now, if for some $i$ we have $w_i<1$ (meaning that $p_i<1/2$ or $\log w_i<0$ for some $i$), then we switch $0$ and $1$ in those coordinates to get $\Bern(1-p_i)$ and $\Bern(1-a)$, respectively.
Then, in $Q$, the first $z$ coin biases are $a$ and the last $n-z$ coin biases are $1-a$ without loss of generality.
In that case, as before, $\dtv(P,Q)$ is
\[
A \sum_{S\subseteq [n]:M\subseteq S}
\max\left(0, \prod_{i\in S\setminus M} w_i -B \prod_{i\in S} v_i\right),
\]
where $w_i\geq1$ for every $i$ and $v_i:=\frac{q_i}{1-q_i}$ whereby $q_i$ is equal to $a$ or $1-a$ depending on whether $w_i$ was originally $\geq1$ or $<1$, respectively.

In this case, for every $S$, its Hamming weight (if we identify a set $S\subseteq\sqbra{n}$ with its characteristic vector in $\bool^n$) in its first $z$ coordinates is $s_1$ and in its last $n-z$ coordinates is $s_2$.
Therefore, it suffices to solve a $\#\KNAPSACK$ instance whereby the quantity $\prod_{i\in S\setminus M} w_i-C$ is at most
\[
B\prn{\frac{a}{1-a}}^{s_1}
\!\prn{\frac{1-a}{a}}^{z-s_1}
\!\prn{\frac{1-a}{a}}^{s_2}
\!\prn{\frac{a}{1-a}}^{n-z-s_2}
\]
for $C=C\cprn{j}=m_{\min}\cdot Y/(1+\eps)^j$ as before (see the proof of \Cref{thm:pqhalf}).
Note that \Cref{thm:GKM} gives an algorithm for the above $\#\KNAPSACK$ problem as well.

We then sum over the counts corresponding to all possible disjoint possibilities of $s_1$ and $s_2$ such that $s_1+s_2=\abs{S}$, for all possible values of $\abs{S}$, to get our estimate of $t_j$.
Then, as earlier, our estimate for $d$ will come from \Cref{eqn:DI} for $n_u=2^{n-|M|}$.
\end{proof}

We now turn to \Cref{prop:constant-q_i's}.

\begin{proof}[Proof of \texorpdfstring{\Cref{prop:constant-q_i's}}{}]
Let $M$ be the set of indices $i\in\sqbra{n}$ such that $p_i=1$.
First, assume that $p_i\geq1/2$ for all $i$.
We have that $\dtv(P,Q)$ is
\begin{align*}
\sum_{S\subseteq [n]:M\subseteq S}
\max\!\bigg(0,\prod_{i\in S} p_i \prod_{i\notin S} (1-p_i)
-a_1^{z_{11}}(1-a_1)^{z_{10}}
\cdots a_k^{z_{k1}}(1-a_k)^{z_{k0}}\bigg),
\end{align*}
where $z_i=z_{i0}+z_{i1}$ and $z_{i0}$ and $z_{i1}$ denote the counts of $0$'s and $1$'s respectively in the characteristic vector of $S$ that correspond to the $a_i$ parameter.

Continuing our manipulation, we see that $\dtv(P,Q)$ is equal to
\begin{multline*}
\prod_{i\notin M} (1-p_i)
\sum_{S\subseteq [n]:M\subseteq S}
\max\!\left(0,\prod_{i\in S\setminus M} \left(\frac{p_i}{1-p_i}\right)
-\frac{a_1^{z_{11}}(1-a_1)^{z_{10}}
\cdots a_k^{z_{k1}}(1-a_k)^{z_{k0}}}{\prod_{i\notin M} (1-p_i)}\right)\\
=A\sum_{S\subseteq\sqbra{n}:M\subseteq S}\max\!\left(0,\prod_{i\in S\setminus M}w_i
-B\cdot\prod_{i=1}^ka_i^{z_{i1}}\prn{1-a_i}^{z_{i0}}\right)
\end{multline*}
for $A:=\prod_{i\notin M} (1-p_i)$, $w_i:=p_i/\prn{1-p_i}$, and $B:=1/\prod_{i\notin M}\prn{1-p_i}$.

An argument similar to that of \Cref{thm:pqhalf} (based again on the fact that we use finite precision) can be used to show that a normalized version of $\dtv\cprn{P,Q}$ lies in some interval $\sqprn{1,V}$ for $V\leq2^{\poly{n}}$ which again we perceive as $\sqprn{1,V}=\bigcup_{i=1}^{u}[(1+\eps)^i,(1+\eps)^{i+1})$ for $u\leq\cpoly{n}$.
This enables us to use the same approach as in the proof of \Cref{thm:pqhalf}.
Specifically, we approximate $\dtv\cprn{P,Q}$ as $A\cdot m_{\min}\cdot d$ where $d$ is defined as in \Cref{eq:d-def}.
We then approximate $d$ as in the proof of \Cref{thm:pqhalf}, with a notable difference being that now we have to use \Cref{thm:GKM} instead of \Cref{lem:GKM-1} for the $\#\KNAPSACK$ instances to which we reduce the estimation of $\dtv\cprn{P,Q}$.

Therefore, following \Cref{thm:pqhalf}, it would suffice to estimate $d$.
According to \Cref{eqn:DI}, we shall approximate the quantities $t_j:=n_{u}-n_j$ ($n_i$'s as in the proof of \Cref{thm:pqhalf}), which here count the sets $S\subseteq\sqbra{n}$ such that
\begin{align}
\prod_{i\in S\setminus M}w_i
\le B\prod_{i=1}^ka_i^{z_{i1}}\prn{1-a_i}^{z_{i0}}+C
=:D
\label{eq:knapsack-2}
\end{align}
for $C=C\cprn{j}=m_{\min}\cdot Y/(1+\eps)^j$, for the corresponding values of $m_{\min}$ and $Y$ (see the proof of \Cref{thm:pqhalf}).

We perform this counting as follows.
We partition the $2^n$ values of $S\subseteq\sqbra{n}$ into subsets corresponding to every possibility of $z_{i1}$'s and $z_{i0}$'s between $0$ and $z_i$, so that $\abs{S}=\sum_{i=1}^kz_{i1}$.
Hence there are at most
\(
\prod_{i=1}^k\prn{z_i+1}\leq\prn{n+1}^k
\)
many parts.
For each such part, we solve a $\#\KNAPSACK$ instance with the following constraints:
\begin{itemize}
\item[$(a)$] Each $z_{i1}$ and $z_{i0}$ correspond to a fixed possibility determined by $S$.
\item[$(b)$] It is the case that $\prod_{i\in S\setminus M} w_i\le D$ for some $D$ determined by $j$ and the $z_{i1}$'s and $z_{i0}$'s as in \Cref{eq:knapsack-2}.
\end{itemize}
Therefore for each part the number of corresponding $\KNAPSACK$ solutions can be approximately counted in polynomial time by the algorithm of \Cref{thm:GKM}.
Our estimate for $t_j$ then is the sum of all of these estimates, which will still be $(1+\varepsilon)$-approximate.
Then (as in \Cref{thm:pqhalf} and \Cref{prop:Bern-a}) our final estimate for $d$ will come from \Cref{eqn:DI} for $n_u=2^{n-|M|}$.

Now, if there is any $p_i<1/2$, then we work with $(1-p_i)$ and $(1-q_i)$ at that particular coordinate and repeat the argument outlined above; this effectively doubles the number of parameters to $2k$ and the resulting algorithm would still run in polynomial time for the case where $k=O(1)$.
\end{proof}

\section{Conclusion}

\label{sec:conclusion}

We initiated a systematic study of the computational nature of the TV distance, a widely used notion of distance between probability distributions.
Our findings are twofold:
On the one hand, we establish hardness results for exactly computing (or approximating) the TV distance (\Cref{thm:hardness-Bern-products-intro}; \Cref{thm:dTV-zero-or-not-NP-complete-intro}).
On the other hand, we present efficient deterministic approximation algorithms (\Cref{thm:pqhalf}; \Cref{prop:constant-q_i's}; \Cref{prop:Bern-a}) for its estimation in some special cases of product distributions.

To conclude, the main open questions that arise from our work are:
\begin{enumerate}
\item
Does there exist an FPTAS for approximating the TV distance between two product distributions?
\item
For what other classes of probabilistic models do there exist TV distance approximation schemes?
\item
What about other notions of distance or similarity between probabilistic models?
\end{enumerate}

\section*{Acknowledgements}

The work of AB was supported in part by National Research Foundation Singapore under its NRF Fellowship Programme (NRF-NRFFAI-2019-0002) and an Amazon Faculty Research Award.
The work of SG was supported by an initiation grant from IIT Kanpur and a SERB award CRG/2022/007985.
Pavan's work is partly supported by NSF award 2130536 and part of the work was done while visiting Simons Institute for the Theory of Computing.
Vinod's work is partly supported by NSF award 2130608 and part of the work was done while visiting Simons Institute for the Theory of Computing.
This work was supported in part by National Research Foundation Singapore under its NRF Fellowship Programme [NRF-NRFFAI1-2019-0004] and an Amazon Research Award.
The work was done in part while AB and DM were visiting the Simons Institute for the Theory of Computing.

\newcommand{\etalchar}[1]{$^{#1}$}

\appendix

\section{Proof of \texorpdfstring{\Cref{thm:GKM}}{}}

\label{sec:appendix}

The proof of \Cref{thm:GKM} follows by adapting the work of \cite{GKM10}.
We first fix some notation and terminology.

A \emph{$\prn{W,n}$-branching program} is a branching program of width $W$ over $n$ Boolean input variables.
A \emph{read-once branching program (ROBP)} is a branching program whereby each input variable is accessed only once.
A \emph{monotone $(W, n)$-ROBP} is a $(W,n)$-ROBP such that in each of its layers $L$ the nodes of $L$ are totally ordered under some relation $\prec$, and whenever $u\prec v$ for some nodes $u$ and $v$ it is the case that the set of partial accepting paths that start $u$ are a subset of the the set of partial accepting paths that start at $v$.

Given a branching program $M$ and a string $z$, the notation $M\cprn{z}$ denotes the output (``accept''/``reject'') of $M$ on input $z$.

An \emph{implicit description} of a monotone ROBP is a description according to which one can efficiently check the relative order of two nodes under $\prec$ (within any layer), and given a node $u$ one can efficiently compute its neighbors.

The following notion of small-space sources was introduced by~\cite{KRVZ06}.

\begin{definition}[\cite{KRVZ06}]
\label{def:small-space}
A \emph{width-$w$ small-space source} is described by a $\prn{w,n}$-branching program $D$ with an additional probability distribution $p_v$ on the outgoing edges associated with vertices $v\in D$.
Samples from the source are generated by taking a random walk on $D$ according to the $p_v$'s and outputting the labels of the edges traversed.
\end{definition}

We require the following useful lemmas from \cite{GKM10}.

\begin{lemma}[\cite{GKM10}]
\label{lem:GKM-3}
Given a ROBP $M$ of width at most $W$ and a small-space source $D$ of width at most $S$, $\cprq{M(x)=1}{x\sim D}$ can be computed exactly in time $O(nSW)$.
\end{lemma}

\begin{lemma}[\cite{GKM10}]
\label{lem:GKM-4}
Given a $(W, n)$-ROBP $M$, the uniform distribution over $M$’s accepting inputs, $\brkts{x \mid M(x) = 1}$ is a width $W$ small-space source.
\end{lemma}

We further require the following result from \cite{GKM10}.

\begin{lemma}[\cite{GKM10}]
\label{thm:Thm-4.2}
Given a monotone $(W, n)$-ROBP $M$, $\delta > 0$, and a small-space source $D$ over $\bool^n$ of width at most $S$, there exists an $(O(n^2S/\delta),n)$-monotone ROBP $M_0$ such that
for all $z$, $M(z) \leq M_0(z)$ and
\[
\cprq{M(z)=1}{z\sim D}
\leq\cprq{M_0(z)=1}{z\sim D}
\leq(1+\delta)\cprq{M(z)=1}{z\sim D}.
\]
Moreover, given an implicit description of $M$ and a description of $D$, $M_0$ can be constructed in deterministic time $O(n^3S(S+\log W)\log(n/\delta)/\delta)$.
\end{lemma}

The main take-away of \Cref{thm:Thm-4.2} is that the number of accepting paths of $M_0$ (under the distribution $D$) approximates the number of accepting paths of $M$ (under the distribution $D$), and moreover $M_0$ has small width.

We now turn to the proof of \Cref{thm:GKM}.

\begin{proof}[Proof of \Cref{thm:GKM}]
Take $M$ in \Cref{thm:Thm-4.2} to be a ROBP for $\KNAPSACK$.
In particular, $M$ decides the validity of the inequality $\sum_{i\in S}a_i\leq b$, which may be also written as $\sum_{i\in\sqbra{n}}a_ix_i\leq b$ if we let $x_i=1$ if and only if $i\in S$.
The ROBP $M$ has $n+1$ layers; layer $0$ has a single start node.
Every other layer $i$ has a node for each partial sum $\sum_{j\leq i}a_ix_i$.
For a node $v$ in layer $i-1$ and $x_i\in\bool$, the $x_i$-th neighbor of $v$ is $v+a_ix_i$.
Naturally, the nodes in the last layer are either rejecting (if their label is more than $b$) or accepting (otherwise).

Note that $M$ may have width $W$ (at most) exponential in $n$; this makes it prohibitive in terms of running time to directly use \Cref{lem:GKM-3} in order to count $\KNAPSACK$ solutions.
Therefore, \Cref{thm:Thm-4.2} comes handy here.

To apply \Cref{thm:Thm-4.2}, let us first note that $M$ is monotone.
Indeed, we can define a total node ordering $\prec$ within each layer of $M$ as follows:
Given two nodes $u,v$ that both belong to some layer of $M$, we define $u\prec v$ if and only if $u>v$.
This satisfies the requirements of a ROBP being monotone as in this case the partial solutions that start at $u$ are a subset of the partial solutions that start at $v$, since the smaller partial sum $v$ allows for more flexibility with respect to the items that we can add to its associated solution.

So by \Cref{thm:Thm-4.2} we can construct in time $O(n^3S(S+\log W)\log(n/\delta)/\delta)$ some ROBP $M_0$ which has width $W_0=O(n^2S/\delta)$ and is such that the probability that $M_0$ is accepting under the distribution $D$ approximates the probability that $M$ is accepting under the distribution $D$.

By \Cref{lem:GKM-4}, the Hamming weight constraints of \Cref{thm:GKM} can be sampled by some small space source of width at most $S\leq\cpoly{n}$, since there is some ROBP of width $\prod_{i=1}^k\prn{\abs{S_i}+1}\leq\prn{n+1}^k=\cpoly{n}$ that only accepts the set of strings that satisfy the Hamming weight constraints of \Cref{thm:GKM}.

This means that the width of $M_0$, namely $W_0$, is at most $O\cprn{\cpoly{n}/\delta}$, and that $M_0$ can be constructed in time $O\cprn{\cpoly{n}/\delta}$ (since $\log W=O\cprn{\cpoly{n}}$).
By \Cref{lem:GKM-3}, we can compute the probability that $M_0$ is accepting under the distribution $D$ in time $O\cprn{nSW_0}=O\cprn{\cpoly{n}/\delta}$.
Let $p$ denote this probability.
As a last step, we multiply $p$ by
\[
\prod_{i=1}^k
\binom{\abs{S_i}}{r_i},
\]
which is the number of strings in the support of $D$ (i.e., the set of strings that have non-zero probability to be sampled by $D$), to get the number of accepting paths of $M_0$.

Since $p=\cprq{M_0(z)=1}{z\sim D}$ $\prn{1+\delta}$-approximates $\cprq{M_0(z)=1}{z\sim D}$, we get that the number of accepting paths of $M_0$ $\prn{1+\delta}$-approximates the number of accepting paths of $M$.

The result now follows from the fact that $M$ is a ROBP for $\KNAPSACK$, so the number of accepting paths of $M_0$ $\prn{1+\delta}$-approximates the number of $\KNAPSACK$ solutions.

Finally the running time of this procedure is polynomial in $n,\log W$, and $1/\delta$, which is polynomial in $n$ and $1/\delta$ since the width of $M$ is $\log W=\cpoly{n}$.
\end{proof}

\end{document}